\tikzstyle{every node}=[initial text=]
\tikzstyle{location}=[rectangle, rounded corners, minimum size=12pt, draw=black, fill=blue!10, inner sep=2pt]
\tikzstyle{invariant}=[draw=black, dotted, inner sep=1pt] 
\tikzset{main node/.style={circle,fill=blue!20,draw,minimum size=0.8cm,inner sep=0pt},
            }
\definecolor{darkblue}{rgb}{0, 0, 0.7}
\crefname{line}{\text{line}}{\text{lines}} 
\newtheorem{proposition}{Proposition}
\newtheorem{theorem}{Theorem}
\newtheorem{definition}{Definition}
 \newtheorem{example}{Example}
\newtheorem{remark}{Remark}
	\newcommand{\todoinline}[1]{\mbox{}{\color{red}{\textbf{TODO}\ifx#1\\\else:\ \fi #1}}} 
	\newcommand{\todoinline}[1]{}
\definecolor{cof}{RGB}{219,144,71}
\definecolor{pur}{RGB}{186,146,162}
\definecolor{greeo}{RGB}{91,173,69}
\definecolor{greet}{RGB}{52,111,72}
\definecolor{red}{RGB}{210,0,32}
 	\definecolor{colorok}{RGB}{80,80,150}
	\definecolor{colorok}{RGB}{0,0,0}
\newcommand{\eg}{\textcolor{colorok}{e.\,g.,}\xspace}
\newcommand{\ie}{\textcolor{colorok}{i.\,e.,}\xspace}
\begin{document}
\title{Generation of bounded invariants  via 
stroboscopic set-valued maps: Application to the stability analysis
of parametric time-periodic systems
\todoinline{This is the version with comments. To disable comments, comment out line~3 in the \LaTeX{} source.}}
\numberofauthors{2} 
\author{
\alignauthor J. Jerray\\
\affaddr{Université Sorbonne Paris Nord, LIPN, CNRS, Villetaneuse, France} \\
\alignauthor L. Fribourg\\
\affaddr{Université Paris-Saclay, LSV, CNRS, ENS Paris-Saclay}
}
%

\pagestyle{plain}

\maketitle

\thispagestyle{plain}

\date{\today}

\maketitle

\todoinline{Instructions: 10 pages max (Excluding references);
    10pt font;
    two-column ACM format}

\begin{abstract}
A method is given for generating a bounded invariant of a differential system
with a given set of initial conditions around a point $x_0$. This
invariant has the form of a tube centered on the Euler approximate solution
starting at $x_0$,
which has for radius an upper bound on the distance between the approximate solution 
and the exact ones. The method consists in finding a real $T>0$
such that the ``snapshot'' of the tube at time $t=(i+1)T$
is included in the snapshot at $t=iT$, for some integer $i$. 
In the phase space, the invariant is therefore in the shape of a torus. 
A simple additional condition is also given to ensure that the solutions of the system can never converge to a point of equilibrium. In dimension 2, this ensures that all solutions converge towards a limit cycle. The method is extended in case the dynamic system contains a parameter $p$, thus allowing the stability analysis of the system for a range of values of $p$. This is illustrated on classical Van der Pol's system.

\end{abstract}

\category{G.1.7}{Mathematics of Computing}{Numerical Analysis}
\category{F.1.1}{Theory of Computation}{Computation by Abstract Devices}

\terms{Algorithms, Theory, Verification}

\keywords{differential equations, periodicity, limit cycle, stability} 





\section{Introduction}

Given a differential system $\Sigma: dx/dt=f(x)$ of dimension $n$, an initial point $x_0\in\mathbb{R}^n$,
a real $\varepsilon>0$, and a ball $B_0=B(x_0,\varepsilon)$\footnote{$B(x_0,\varepsilon)$ is the set $\{z\in\mathbb{R}^n\ |\ \|z-x_0\|\leq\varepsilon\}$ where $\|\cdot\|$ denotes the Euclidean distance.},
we present here a simple method allowing to find a bounded invariant set
of $\Sigma$ containing the trajectories starting at $B_0$. This invariant set has the form of a tube
whose center at time~$t$ is the Euler approximate solution $\tilde{x}(t)$
of the system starting at $x_0$, and radius is a function $\delta_\varepsilon(t)$  bounding
the distance between $\tilde{x}(t)$ and an exact solution $x(t)$
starting at $B_0$. 
The tube can thus be described as $\bigcup_{t\geq 0} B(t)$
where $B(t)\equiv B(\tilde{x}(t), \delta_\varepsilon(t))$.

To find a {\em bounded} invariant, we then look for a positive real $T$
such that $B((i+1)T) \subseteq B(iT)$ for some $i\in\mathbb{N}$.
In case of success, the ball $B(iT)$  is guaranteed to 
contain the ``stroboscopic'' sequence $\{B(jT)\}_{j=i,i+1,\dots}$
of sets $B(t)$ at time $t=iT, (i+1)T, \dots$.
It follows that the bounded portion
$\bigcup_{t\in[iT,(i+1)T]} B(t)$  is equal to
$\bigcup_{t\in[iT,\infty)} B(t)$, 
and thus constitutes the sought bounded invariant set.

We apply the finding of such an invariant to the stability analysis of 
parameterized time-periodic differential systems.
We illustrate our results on the example of 
a parametric Van der Pol (VdP) system for which we show that, 
for a whole range of values of the parameter,
the solutions always converge towards a limit cycle.


\paragraph{Comparison with related work}
We explain here some similarities and differences of our
method  with several kinds of related work.
\begin{itemize}
\item There exists a trend of work on the generation of torus-shaped invariants
using stroboscopic maps of quasi-periodic systems 
with possible parameters (see, e.g., \cite{Baresi-Scheeres,Gomez-Mondelo,Olikara-Scheeres}).
%
%
A first difference is that these works consider
stroboscopic mappings of {\em points} of ${\mathbb R}^n$, while our stroboscopic maps apply to
 {\em sets} of points.
A second difference is that they often use a
Fourier analysis in the {\em frequency domain} (using, e.g., the notion
of ``radii polynomials'' \cite{Castelli-Lessard}) while 
we remain in the {\em time domain}.

\item Our method makes use of a rigorous time-integration method in order to enclose the exact solutions with tubes,
which is similar  to what is done using high order
of Taylor method in ODE integration  as
proposed by Lohner or Taylor models \cite{Lohner87,Zgliczynski}.
Such methods are used in \cite{Castelli-Lessard,Kapela-Simo} 
to rigorously compute
the eigenvalues of a so-called 
``monodromy matrix'', which allows to determine the linear stability of the
equilibrium points of the system.
%
Unlike these works,  our stability analysis does not
try to compute such eigenvalues of monodromy matrices.

\item Our method shares also some common features  with 
the works of \cite{Parrilo-Slotine,Aminzare-Sontag,vandenBerg-Queirolo}, which 
aim at proving a
{\em contractivity} property of the system
(i.e., that any two solutions converge exponentially to each other).
In~\cite{Aminzare-Sontag}, 
contractivity amounts  to the 
finding of a negative definitive quadratic form
(which is equivalent to the existence of a Lyapunov function for the system).
In~\cite{Parrilo-Slotine},
``Squares-of-Sum  programming is used to find ranges of uncertainty under which a system with uncertain perturbations is always contracting with the original contraction metric''.
In~\cite{vandenBerg-Queirolo}, they turn the stability problem
into the contractivity of a fixed point operator that is checked
with the assistance of a computer.
The difference here is that we do not try to prove a contractivity
property, but only the existence of two 
set-valued snapshots, one of which is included in the other one. 
This is a much weaker property and easier to prove.
\end{itemize}
Our method is simple and {\em a priori} efficient, but, as explained in
\cref{ex:vdp2}, cannot compete
with the specialized tools of the literature (which can use, e.g.,
a large number of Fourier modes) on complex
quasi-periodic systems like Van der Pol system with high values of parameter, 
or chaotic systems with strange attractors.

\paragraph{Plan of the paper}
In \cref{sec:method}, we present our method, then explain how to apply it to the stability analysis of parameterized systems in \cref{sec:ex}.
We conclude in \cref{sec:final}.

\section{Method}\label{sec:method}
\subsection{Euler's method and error bounds}\label{ss:Euler}
Let us consider the differential system: 
$$\frac{dx(t)}{dt}=f(x(t)),$$
with states $x(t)\in\mathbb{R}^n$.
%
%
We will  use $x(t;x_0)$ (or sometimes just $x(t)$) to denote  the exact continuous solution 
of the system at time~$t$,
for a given initial condition~$x_0$.
We use $\tilde{x}(t;y_0)$
(or just $\tilde{x}(t)$) to denote Euler's approximate value 
of $x(t;y_0)$
(defined by $\tilde{x}(t;y_0)=y_0+tf(y_0)$  for $t\in [0,\tau]$, where $\tau$ is the integration time-step).

We suppose that we know a bounded region
${\cal S}\subset \mathbb{R}^n$ containing
the solutions of the system for a set of initial conditions $B_0$
and  a certain amount of time.
%
We now give an upper bound to the error between
the exact solution of the ODE and
its Euler approximation on ${\cal S}$ (see~\cite{CoentF19,SNR17}).

\begin{definition}\label{def:delta}
Let $\varepsilon$ be a given positive constant. Let us define, for $t\in [0,\tau]$,
$\delta_{\varepsilon}(t)$ as follows:\\
$\mbox{if } \lambda <0:$
$$\delta_{\varepsilon}(t)=\left(\varepsilon^2 e^{\lambda t}+
 \frac{C^2}{\lambda^2}\left(t^2+\frac{2 t}{\lambda}+\frac{2}{\lambda^2}\left(1- e^{\lambda t} \right)\right)\right)^{\frac{1}{2}}$$
%
$\mbox{if } \lambda = 0:$
$$\delta_{\varepsilon}(t)= \left( \varepsilon^2 e^{t} + C^2 (- t^2 - 2t + 2 (e^t - 1)) \right)^{\frac{1}{2}}$$
%
$\mbox{if } \lambda > 0:$
$$\delta_{\varepsilon}(t)=\left(\varepsilon^2 e^{3\lambda t}+ 
\frac{C^2}{3\lambda^2}\left(-t^2-\frac{2t}{3\lambda}+\frac{2}{9\lambda^2}
\left(e^{3\lambda t}-1\right)\right)\right)^{\frac{1}{2}}$$
%
where $C$ and $\lambda$ are real constants specific to function $f$,
defined as follows:
$$C=\sup_{y\in {\cal S}} L\|f(y)\|,$$
where $L$ denotes the Lipschitz constant for $f$, and
$\lambda$ is the ``one-sided Lipschitz constant'' (or ``logarithmic Lipschitz constant'' \cite{Aminzare-Sontag}) associated to $f$, \ie{} the 
minimal constant such that, for all $y_1,y_2\in {\cal S}$:
$$\langle f(y_1)-f(y_2), y_1-y_2\rangle \leq \lambda\|y_1-y_2\|^2,\ \ \ \ \ (H0)$$
where $\langle\cdot,\cdot\rangle$ denotes the scalar product of two vectors
of ${\cal S}$.
\end{definition}
The constant $\lambda$ can be computed using a nonlinear optimization solver (\eg{} CPLEX~\cite{cplex2009v12}) or using the Jacobian matrix of $f$ (see, \eg{}~\cite{Aminzare-Sontag}).

\begin{proposition}\label{prop:basic}\cite{SNR17}
Consider the solution $x(t;y_0)$ of $\frac{dx}{dt}=f(x)$ with
initial condition~$y_0$ and the approximate Euler solution
$\tilde{x}(t;x_0)$ with initial condition $x_0$.
For all $y_0\in B(x_0,\varepsilon)$, we have:
$$\|x(t;y_0)-\tilde{x}(t;x_0)\|\leq \delta_{\varepsilon}(t).$$
\end{proposition}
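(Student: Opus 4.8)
The plan is to control the squared distance $\phi(t):=\|x(t;y_0)-\tilde x(t;x_0)\|^2$ by a scalar differential inequality and then integrate it explicitly. Since $f$ is Lipschitz, $x(\cdot;y_0)$ is $C^1$, and on $[0,\tau]$ we have $\tilde x(t;x_0)=x_0+tf(x_0)$, so $\tilde x(\cdot;x_0)$ is $C^1$ with $\dot{\tilde x}(t)=f(x_0)$. Hence $\phi$ is differentiable and, abbreviating $x(t)=x(t;y_0)$, $\tilde x(t)=\tilde x(t;x_0)$,
$$\phi'(t)=2\langle x(t)-\tilde x(t),\ f(x(t))-f(x_0)\rangle,$$
with initial value $\phi(0)=\|y_0-x_0\|^2\le\varepsilon^2$ because $y_0\in B(x_0,\varepsilon)$.

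Next I would split $f(x(t))-f(x_0)=\big(f(x(t))-f(\tilde x(t))\big)+\big(f(\tilde x(t))-f(x_0)\big)$ and bound the two resulting inner products. For the first, assumption $(H0)$ with $y_1=x(t)$, $y_2=\tilde x(t)$ gives $2\langle x(t)-\tilde x(t),\ f(x(t))-f(\tilde x(t))\rangle\le 2\lambda\,\phi(t)$. For the second, Cauchy--Schwarz together with the Lipschitz property of $f$ gives $2\langle x(t)-\tilde x(t),\ f(\tilde x(t))-f(x_0)\rangle\le 2\|x(t)-\tilde x(t)\|\,L\|\tilde x(t)-x_0\|$; since $\tilde x(t)-x_0=tf(x_0)$ we get $L\|\tilde x(t)-x_0\|=tL\|f(x_0)\|\le tC$ by the definition $C=\sup_{y\in\mathcal S}L\|f(y)\|$ (using that $x_0,\tilde x(t),x(t)\in\mathcal S$, which is part of the standing hypothesis on $\mathcal S$). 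Combining the two bounds,
$$\phi'(t)\ \le\ 2\lambda\,\phi(t)+2tC\sqrt{\phi(t)}.$$

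It remains to remove the square root and integrate. By Young's inequality $2tC\sqrt{\phi(t)}\le\beta\,\phi(t)+\tfrac{t^2C^2}{\beta}$ for any $\beta>0$, I would take $\beta=-\lambda$ when $\lambda<0$, $\beta=1$ when $\lambda=0$, and $\beta=\lambda$ when $\lambda>0$; this turns the bound into the linear form $\phi'(t)\le a\,\phi(t)+b\,t^2$ with $(a,b)=(\lambda,-\tfrac{C^2}{\lambda})$, $(1,C^2)$, and $(3\lambda,\tfrac{C^2}{\lambda})$ respectively. Multiplying by the integrating factor $e^{-at}$ and integrating from $0$ to $t$ (using $\phi(0)\le\varepsilon^2$) gives $\phi(t)\le e^{at}\varepsilon^2+b\,e^{at}\int_0^t e^{-as}s^2\,ds$; the integral is elementary (two integrations by parts), and a direct computation shows that in each of the three cases the right-hand side equals exactly $\delta_\varepsilon(t)^2$ from Definition~\ref{def:delta}. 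Taking square roots yields the claim.

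The only genuinely delicate point is the choice of the Young parameter $\beta$ in each regime: it must be picked so that the coefficient $a=2\lambda+\beta$ reproduces precisely the exponentials $e^{\lambda t}$, $e^{t}$, $e^{3\lambda t}$ occurring in the three branches of Definition~\ref{def:delta} (in particular the slightly surprising factor $3\lambda$ comes from $\beta=\lambda$ in the case $\lambda>0$); everything else is routine calculus and bookkeeping. One should also keep in mind the implicit requirement that $\mathcal S$ contains both the exact trajectory $x(\cdot;y_0)$ and the Euler segment $\tilde x(\cdot;x_0)$ on $[0,\tau]$, since $\lambda$ and $C$ are only defined on $\mathcal S$.
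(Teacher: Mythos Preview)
Your argument is correct. The differential inequality $\phi'(t)\le 2\lambda\phi(t)+2tC\sqrt{\phi(t)}$ is derived exactly as you say, the Young parameters $\beta=-\lambda,1,\lambda$ in the three regimes yield the linear inequalities $\phi'\le a\phi+bt^2$ with $(a,b)=(\lambda,-C^2/\lambda)$, $(1,C^2)$, $(3\lambda,C^2/\lambda)$, and the explicit integration reproduces each branch of Definition~\ref{def:delta} on the nose.

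As for comparison: the paper does not actually prove \cref{prop:basic} --- it is quoted from~\cite{SNR17} without argument --- so there is no in-paper proof to compare against. Your derivation is precisely the standard one underlying that reference, and your remark about the implicit assumption that both the exact trajectory and the Euler segment remain in $\mathcal S$ on $[0,\tau]$ is a useful caveat that the paper leaves tacit.
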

\cref{prop:basic} underlies the principle of our set-based method
where set of points are represented as balls centered around the 
Euler approximate values of the solutions. This illustrated in~\cref{fig:illustration}: for any initial condition $x^0$ belonging
to the ball $B(\tilde{x}^0,\delta(0))$
with $\delta(0)=\varepsilon$,
the exact solution $x^1\equiv x(\tau;x^0)$ belongs to the ball $B(\tilde{x}^1,\delta_\varepsilon(\tau))$ where $\tilde{x}^1$ denotes the Euler approximation 
$\tilde{x}^0+\tau f(\tilde{x}^0)$ at 
$t=\tau$.
\begin{figure}[h!]
\centering
\includegraphics[scale=0.5]{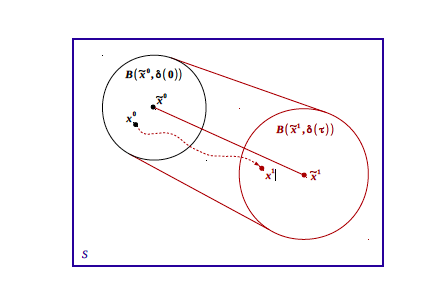}
\caption{Illustration of \cref{prop:basic}}
\label{fig:illustration}
\end{figure}

\subsection{Systems with bounded uncertainty}

Let us now show how the method extends to 
systems with ``disturbance'' or ``bounded uncertainty''.
A differential system with bounded uncertainty is of the form 
$$\frac{dx(t)}{dt}=f(x(t),w(t)),$$
with $t\in\mathbb{R}^n_{\geq 0}$,
states $x(t)\in\mathbb{R}^n$, and uncertainty $w(t)\in{\cal W}\subset \mathbb{R}^n$ (${\cal W}$ is compact, \ie{} closed and bounded).
We assume that any possible disturbance trajectory is bounded at any
point in time in the compact set $W$. We denote this by
$w(\cdot)\in{\cal W}$, which is a shorthand for
$w(t)\in {\cal W},\forall t\geq 0$.
See~\cite{SchurmannA17,SchurmannA17b} for details. 
%
%
%
%
We now suppose (see~\cite{AdrienRP17}) that
there exist constants $\lambda\in\mathbb{R}$ 
and $\gamma\in\mathbb{R}_{\geq 0}$ such that,
for all $y_1,y_2\in {\cal S}$ and $w_1,w_2\in {\cal W}$:\\

$\langle f(y_1,w_1)-f(y_2,w_2), y_1-y_2\rangle$

$\leq 
\lambda\|y_1-y_2\|^2 + \gamma \|y_1-y_2\|\|w_1-w_2\|\ \ \ \ \ (H1).$\\
\\
This formula can be seen as a generalization of (H0) (see \cref{ss:Euler}). 
Recall that $\lambda$ has to be computed in the absence of uncertainty
($|{\cal W}|=0$). The additional constant $\gamma$ is used for taking into account
the uncertainty $w$.
Given $\lambda$, the constant $\gamma$ can be computed itself
using a nonlinear optimization solver (\eg{} CPLEX~\cite{cplex2009v12}). Instead of computing them globally
for ${\cal S}$, it is advantageous to compute $\lambda$ and $\gamma$ {\em locally} depending on the subregion of ${\cal S}$ occupied by the system state during a considered interval of time.
Note that the notion of contraction (often used in the literature~\cite{ManchesterCDC13,Aminzare-Sontag}) corresponds to the case where $\lambda<0$, but we do not need this assumption here ($\lambda$ can be positive, at least locally). 
We now give a version of \cref{prop:basic} with bounded uncertainty $w(\cdot)\in{\cal W}$, originally proved in~\cite{AdrienRP17}.
\begin{proposition}\label{prop:1bis}\cite{AdrienRP17}
Consider
a system $\Sigma$ with bounded uncertainty
of the form $\frac{dx(t)}{dt}=f(x(t),w(t))$
satisfying (H1).
%
%

Consider a point~$x_0\in {\cal S}$ and a point $y_0\in B(x_0,\varepsilon)$.
Let $x(t;y_0)$ be the exact solution of the system 
$\frac{dx(t)}{dt}=f(x(t),w(t))$ with bounded uncertainty ${\cal W}$
and initial condition $y_0$,
and $\tilde{x}(t;x_0)$ the Euler approximate solution of the system
$\frac{dx(t)}{dt}=f(x(t),0)$ {\em without uncertainty}
($|{\cal W}|=0$) with initial condition $x_0$.
We have,
for all $w(\cdot) \in {\cal W}$ and $t\in[0,\tau]$: 

$$\|x(t;y_0)-\tilde{x}(t;x_0)\|\leq \delta_{\varepsilon,{\cal W}}(t).$$
with 
\begin{itemize}
\item if $\lambda <0$,
\begin{multline}
 \delta_{\varepsilon,{\cal W}}(t) = 
 \left( \frac{C^2}{-\lambda^4} \left( - \lambda^2 t^2 - 2 \lambda t + 2 e^{\lambda t} - 2 \right) \right.   \\
 + \left. \frac{1}{\lambda^2} \left( \frac{C \gamma |{\cal W}|}{-\lambda} \left( - \lambda t + e^{\lambda t} -1 \right) \right. \right.  \\ + \left. \left. \lambda \left( \frac{\gamma^2 (|{\cal W} |/2)^2}{-\lambda} (e^{\lambda t } - 1) + \lambda \varepsilon^2 e^{\lambda t}  \right) \right)  \right)^{1/2}
\end{multline}
\item if $\lambda >0$,
\begin{multline}
 \delta_{\varepsilon,{\cal W}}(t) = \frac{1}{(3\lambda)^{3/2}} \left( \frac{C^2}{\lambda} \left( - 9\lambda^2 t^2 - 6\lambda t + 2 e^{3\lambda t} - 2 \right) \right.   \\
 + \left. 3\lambda \left( \frac{C  \gamma |{\cal W}|}{\lambda} \left( - 3\lambda t + e^{3\lambda t} -1 \right) \right. \right.  \\
 + \left. \left. 3\lambda \left( \frac{\gamma^2 (|{\cal W} |/2)^2}{\lambda} ( e^{3\lambda t } - 1) + 3\lambda \varepsilon^2 e^{3\lambda t}  \right) \right)  \right)^{1/2}
\end{multline}
\item if $\lambda = 0$, 
\begin{multline}
 \delta_{\varepsilon,{\cal W}}(t)= 
 \left( {C^2} \left( -  t^2 - 2  t + 2 e^{ t} - 2 \right) \right.   \\
 + \left.  \left( {C \gamma |{\cal W}|} \left( -  t + e^{ t} -1 \right) \right. \right.  \\ + \left. \left.  \left({\gamma^2 (|{\cal W} |/2)^2} ( e^{ t } - 1) +  \varepsilon^2 e^{ t}  \right) \right)  \right)^{1/2}
\end{multline}
\end{itemize}
\end{proposition}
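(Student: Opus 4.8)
The plan is to reduce Proposition~\ref{prop:1bis} to a differential-inequality argument on the squared error $z(t)\equiv\|x(t;y_0)-\tilde{x}(t;x_0)\|^2$, exactly as in the proof of \cref{prop:basic}, but now carrying the extra uncertainty term coming from $(H1)$. First I would write $e(t)=x(t;y_0)-\tilde{x}(t;x_0)$ and differentiate $z(t)=\langle e(t),e(t)\rangle$, getting $\dot z(t)=2\langle \dot x(t)-\dot{\tilde x}(t),\,e(t)\rangle$. Now $\dot x(t)=f(x(t),w(t))$ while $\dot{\tilde x}(t)=f(x_0,0)$ is constant on $[0,\tau]$ (Euler step). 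I would split $f(x(t),w(t))-f(x_0,0)$ as $\big(f(x(t),w(t))-f(\tilde x(t),0)\big)+\big(f(\tilde x(t),0)-f(x_0,0)\big)$. The first bracket is estimated by $(H1)$ with $y_1=x(t)$, $y_2=\tilde x(t)$, $w_1=w(t)\in\mathcal W$, $w_2=0$, which yields $\langle f(x(t),w(t))-f(\tilde x(t),0),e(t)\rangle\le \lambda z(t)+\gamma\|e(t)\|\,\|w(t)\|\le \lambda z(t)+\gamma|\mathcal W|\sqrt{z(t)}$; the second bracket is bounded in norm by $L\|\tilde x(t)-x_0\|=L\|t f(x_0)\|\le Ct$ using the Lipschitz constant $L$ and the definition $C=\sup L\|f\|$, so its contribution to $\dot z$ is at most $2Ct\sqrt{z(t)}$. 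Combining, I obtain a scalar differential inequality of the form $\dot z(t)\le 2\lambda z(t)+2\big(Ct+\tfrac{\gamma|\mathcal W|}{2}\big)\sqrt{z(t)}$ with $z(0)\le\varepsilon^2$.

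Next I would pass to $r(t)=\sqrt{z(t)}$ (valid where $z>0$, with the standard comparison argument handling $z=0$): then $\dot r(t)\le \lambda r(t)+Ct+\tfrac{\gamma|\mathcal W|}{2}$, a \emph{linear} first-order ODE inequality in $r$ with forcing term affine in $t$ plus a constant. By the comparison principle, $r(t)$ is bounded above by the solution $\bar r(t)$ of the corresponding linear ODE $\dot{\bar r}=\lambda\bar r+Ct+\tfrac{\gamma|\mathcal W|}{2}$, $\bar r(0)=\varepsilon$. Solving this by the integrating factor $e^{-\lambda t}$ gives, for $\lambda\neq 0$, $\bar r(t)=e^{\lambda t}\varepsilon+\int_0^t e^{\lambda(t-s)}\big(Cs+\tfrac{\gamma|\mathcal W|}{2}\big)\,ds$, and the integral splits into the known antiderivatives $\int e^{\lambda(t-s)}s\,ds$ and $\int e^{\lambda(t-s)}\,ds$. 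Squaring $\bar r(t)$ and collecting terms should reproduce the closed-form expression claimed for $\delta_{\varepsilon,\mathcal W}(t)$; the case $\lambda=0$ is the same computation with $e^{\lambda t}\to 1$ and the antiderivatives replaced by their limits.

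One subtlety to be careful about is the appearance of $e^{3\lambda t}$ rather than $e^{\lambda t}$ in the $\lambda>0$ branch (and the factor $3$ in front of $\lambda$): this mirrors the same phenomenon already present in \cref{def:delta}, and it comes from a slightly different bookkeeping of the quadratic term when $\lambda>0$ — instead of working with $r=\sqrt z$ directly one keeps $z$ and uses $2\sqrt z\le \eta z+1/\eta$ type inequalities (or, equivalently, bounds $\|e\|$ by a quantity growing like $e^{\lambda t}$ inside the cross term). I would follow verbatim the device used in the proof of \cref{prop:basic}/\cite{SNR17,AdrienRP17} so that the three cases of $\delta_{\varepsilon,\mathcal W}$ match term by term. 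The main obstacle is purely the algebraic reconciliation: verifying that, after squaring $\bar r(t)$ and regrouping, the coefficients of $t^2$, $t$, $e^{\lambda t}$ (resp.\ $e^{3\lambda t}$), $\varepsilon^2$, $\gamma^2|\mathcal W|^2$ and $C\gamma|\mathcal W|$ agree exactly with the stated multinomials — a routine but error-prone computation. Since the statement is attributed to \cite{AdrienRP17}, I would ultimately cite that reference for the detailed verification and present here only the differential-inequality derivation that produces the formulas.
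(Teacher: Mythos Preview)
The paper does not actually prove \cref{prop:1bis}: it simply states the result and attributes it to~\cite{AdrienRP17}, with no accompanying proof or sketch. So there is nothing in the paper to compare your derivation against; your proposal already goes further than the paper does.

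That said, your outline is the standard argument behind this class of bounds and is consistent with what the cited reference does: differentiate the squared error, split $f(x(t),w(t))-f(x_0,0)$ into a ``state'' part handled by $(H1)$ and an ``Euler drift'' part bounded by $Ct$, obtain a scalar differential inequality, and close via the comparison principle. Your own caveat is exactly right: the delicate part is not conceptual but algebraic --- matching the three branches (in particular the $e^{3\lambda t}$ scaling for $\lambda>0$, and the appearance of $|\mathcal W|/2$ versus $|\mathcal W|$ in the $\gamma^2$ and $C\gamma$ terms respectively) requires following the precise bookkeeping of~\cite{AdrienRP17,SNR17} rather than re-deriving from scratch. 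Citing the reference for the detailed computation, as you propose, is exactly what the paper itself does.
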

We will sometimes write $\delta_{{\cal W}}(t)$ instead of
$\delta_{\varepsilon,{\cal W}}(t)$.
\subsection{Correctness}
Consider a differential system $\Sigma: dx/dt=f(x,w)$
with $w\in{\cal W}$, an initial point $x_0\in\mathbb{R}^n$, a real $\varepsilon>0$ and a ball $B_0=B(x_0,\varepsilon)$. Let
$B(t)$ denote $B(\tilde{x}(t), \delta_{\varepsilon,{\cal W}}(t))$
where $\tilde{x}(t)$ is the Euler approximate solution of the system
without uncertainty and initial condition $x_0$\footnote{Note that $B(0)=B_0$ because
$\tilde{x}(0)=x_0$ and $\delta_{\varepsilon,{\cal W}}(0)=\varepsilon$.}.
It follows from \cref{prop:1bis} that $\bigcup_{t\geq 0} B(t)$ 
is an invariant set containing $B_0$.
We can make an a stroboscopic map of this invariant.
by considering periodically the set $B(t)$ at the moments
$t=0,T,2T$, etc., with $T=k\tau$ for some $k$ 
($\tau$ is the time-step used un Euler's method).

If moreover, we can find an integer $i\geq 0$ such that 
$B((i+1)T)\subseteq B(iT)$, then
we have
$B(iT)=\bigcup_{j=i,i+1,\dots}B(jT)$
and $\bigcup_{t\in[iT,(i+1)T]}B(t)=\bigcup_{t\in[iT,\infty)}B(t)$.
The set $\bigcup_{t\in[iT,(i+1)T]}B(t)$ (abbreviated to $B([iT,(i+1)T])$)
is thus a {\em bounded invariant} 
which contains all the solutions $x(t)$
starting at $B_0$, for $t\in[iT,\infty)$.
In the phase space, this bounded invariant has a ``torus'' shape.
We have:
\begin{proposition}\label{prop:inclusion}
Suppose that there exist $T>0$ (with $T=k\tau$ for some $k$) 
and $i\in\mathbb{N}$ such that: $B((i+1)T) \subseteq B(iT)$. Then
%
$B[iT,(i+1)T]\equiv \bigcup_{t\in[iT,(i+1)T]} B(t)$ is a
compact (i.e., bounded and closed) invariant set containing, for $t\in[iT,\infty)$, all the solutions $x(t)$ of $\Sigma$  with initial condition in~$B_0$.
\end{proposition}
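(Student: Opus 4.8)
The plan is to show two set equalities — first a "stroboscopic" one, $B(iT)=\bigcup_{j\ge i}B(jT)$, and then the "continuous" consequence $\bigcup_{t\in[iT,(i+1)T]}B(t)=\bigcup_{t\in[iT,\infty)}B(t)$ — and finally to check compactness and the invariance/containment statement. The key structural observation that makes all of this work is a \emph{shift/monotonicity} property of the tube: if $B(s')\subseteq B(s)$ for two times $s<s'$, then by uniqueness of solutions the whole sub-tube starting at time $s'$ is contained in the whole sub-tube starting at time $s$; more precisely, for every $h\ge 0$ one has $B(s'+h)\subseteq B(s+h)$. The reason is that $B(\cdot)$ is a rigorous over-approximation of the reachable set: any trajectory at time $s'+h$ that started in $B_0$ is, at time $s'$, a point of $B(s')\subseteq B(s)$, hence can be regarded as the time-$h$ continuation of a trajectory passing through $B(s)$, and \cref{prop:1bis} (applied with the appropriate re-centering) guarantees it lies in $B(s+h)$. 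I would state this shift lemma explicitly as the engine of the argument.

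First I would iterate the shift property with $s=iT$, $s'=(i+1)T$, $h=mT$ to get $B((i+m+1)T)\subseteq B((i+m)T)$ for every $m\in\mathbb N$, so the stroboscopic sequence $\{B(jT)\}_{j\ge i}$ is nested decreasing and therefore $\bigcup_{j\ge i}B(jT)=B(iT)$. Next, for the continuous tube, take any $t\ge iT$ and write $t=iT+mT+r$ with $m\in\mathbb N$ and $r\in[0,T)$; applying the shift property with $h=mT$ to the inclusion $B((i+1)T)\subseteq B(iT)$ actually gives the \emph{interval} statement $\bigcup_{u\in[(i+m+1)T,(i+m+2)T]}B(u)\subseteq \bigcup_{u\in[(i+m)T,(i+m+1)T]}B(u)$, hence by induction $B(iT+mT+r)\subseteq \bigcup_{u\in[iT,(i+1)T]}B(u)$. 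This yields $\bigcup_{t\in[iT,\infty)}B(t)\subseteq\bigcup_{t\in[iT,(i+1)T]}B(t)$, and the reverse inclusion is trivial, giving the claimed equality $B[iT,(i+1)T]=\bigcup_{t\in[iT,\infty)}B(t)$.

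It then remains to collect the conclusions. Invariance: by \cref{prop:1bis} the full tube $\bigcup_{t\ge 0}B(t)$ contains every exact solution with initial condition in $B_0$; restricting to $t\ge iT$ and using the equality just proved, $B[iT,(i+1)T]$ contains all solutions $x(t)$, $t\in[iT,\infty)$, with $x(iT)\in B(iT)$ — and this is an invariant set because the flow maps it into itself (any trajectory entering it stays, since for $t\ge iT$ the state remains in $\bigcup_{t'\ge iT}B(t')=B[iT,(i+1)T]$). Compactness: $\tilde x(\cdot)$ is continuous (piecewise affine) and $\delta_{\varepsilon,\mathcal W}(\cdot)$ is continuous on each step and on $[iT,(i+1)T]$ (a finite union of steps), so $[iT,(i+1)T]$ is compact and the map $t\mapsto B(\tilde x(t),\delta_{\varepsilon,\mathcal W}(t))$ is continuous into the space of compact sets with the Hausdorff metric; the union of a continuous family of compact balls over a compact parameter interval is compact (it is the continuous image of a compact set $\{(t,z):t\in[iT,(i+1)T],\ \|z-\tilde x(t)\|\le\delta_{\varepsilon,\mathcal W}(t)\}$), in particular closed and bounded.

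The main obstacle is making the shift lemma rigorous. \cref{prop:1bis} is stated only for $t\in[0,\tau]$ with a fixed center given by the Euler solution from $x_0$, whereas here I need to re-apply it from an arbitrary later time with the tube $B(s)$ as the new set of initial conditions. This requires the (implicit, and used throughout \cref{sec:method}) fact that the tube construction composes correctly over consecutive time-steps — i.e.\ that $\delta_{\varepsilon,\mathcal W}$ is built step-by-step so that $B((m+1)\tau)$ over-approximates the reachable set from $B(m\tau)$ — together with the semigroup property of the exact flow. I would therefore isolate this composition property as a preliminary lemma (or cite the step-wise construction in \cite{AdrienRP17,SNR17}), after which the shift lemma and everything above follow routinely.
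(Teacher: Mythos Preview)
The paper does not give an explicit proof of \cref{prop:inclusion}; the argument is the short informal paragraph that precedes the statement, which simply asserts the two equalities $B(iT)=\bigcup_{j\ge i}B(jT)$ and $\bigcup_{t\in[iT,(i+1)T]}B(t)=\bigcup_{t\in[iT,\infty)}B(t)$ and concludes. Your plan is therefore strictly more detailed than the paper's treatment, and it follows the same line of reasoning: establish the stroboscopic inclusion, promote it to the continuous tube, and then read off invariance and compactness. Your identification of the step-wise composition property (that $B((m+1)\tau)$ over-approximates the reachable set from $B(m\tau)$) as the engine of the argument is exactly the point the paper leaves implicit.

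One point worth tightening. Your ``shift lemma'' is stated as the \emph{ball} inclusion $B(s'+h)\subseteq B(s+h)$, but the justification you give --- tracking an exact trajectory from $B_0$ through $B(s')\subseteq B(s)$ and then forward by time $h$ --- only shows that the \emph{reachable set} at time $s'+h$ lies in $B(s+h)$, not that the (generally larger) over-approximating ball $B(s'+h)$ does. Monotonicity of the one-step Euler map $B(c,r)\mapsto B(c+\tau f(c),\delta_r(\tau))$ on balls is a separate statement that does not follow from \cref{prop:1bis} alone. Fortunately, the proposition as stated only asserts that $B[iT,(i+1)T]$ \emph{contains all solutions} $x(t)$ for $t\ge iT$ and is invariant under the flow; for that, the reachable-set version of the shift is exactly what is needed, and your argument (inductively: $x((i+m)T)\in B(iT)$, hence $x((i+m)T+r)\in B(iT+r)$ for $r\in[0,T]$) goes through cleanly. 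The stronger set equality $\bigcup_{t\ge iT}B(t)=\bigcup_{t\in[iT,(i+1)T]}B(t)$ that the paper mentions in passing would require the ball-level monotonicity, but it is not part of the proposition and you need not prove it. I would simply reformulate the shift lemma at the level of reachable sets and drop the intermediate set equalities; the invariance and compactness paragraphs then stand as written.
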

\begin{remark} 
The value $T$ is an approximate value of the exact period $T^*$ of the system.
Actually, for a system involving a  parameter $p$ (see \cref{sec:ex}), the exact value of the period of the system depends on the 
value of $p$. However, we will see (see \cref{sec:ex}) that a same approximate value
$T$ allows us to analyze the system for a range of values of the 
parameter $p$, and thus a whole range of periods of the system.
\end{remark}

\begin{remark}
Note that the limit ball $\bigcap_{j=i,i+1,\dots}B(jT)$ does not usually have a zero diameter (whereas, under the assumption of point-to-point contraction, it always does). So we cannot use our method to immediately show the convergence to a limit cycle (see, e.g., \cite{Parrilo-Slotine}) except in the 2-dimensional case where
Poincar\'e-Bendixson's theorem applies (see \cref{ex:vdp1}). In the general case
($n>2$), our method only allows us to show that the system remains indefinitely
in a spatially bounded region, and under an additional condition,
never converges towards an equilibrium point (see \cref{sec:ex}).
\end{remark}
\section{Application to the Stability Analysis of Parametric Systems}\label{sec:ex}
Let us now consider a family $\{\Sigma_p\}_{p\in {\cal P}}$
of differential systems $\Sigma_p$ of the form
$dx/dt=f_p(x)$ involving a {\em parameter} $p\in {\cal P}$
(but no uncertainty).
It is useful to find a 
subset ${\cal Q}$ of ${\cal P}$  
and a system $\Sigma':dx/dt=f(x,w)$ with uncertainty such that, for any $p\in{\cal Q}$,
$\Sigma_p$ is a particular form of $\Sigma'$  
for an appropriate uncertainty function $w(\cdot)\in {\cal W}$.
%
%
This is useful to infer certain common properties of the solutions
of $\{\Sigma_p\}_{p\in{\cal Q}}$  from the analysis of the system 
$\Sigma'$ with uncertainty (cf. \cite{Parrilo-Slotine}, Section~5).

\begin{example}\label{ex:vdp0}
Consider the Van der Pol (VdP) system $\Sigma_p$ of dimension $n=2$ with parameter $p\in\mathbb{R}$,
and initial condition in $B_0=B(x_0,\varepsilon)$
for some $x_0\in\mathbb{R}^2$ and $\varepsilon>0$ (see~\cite{vandenBerg-Queirolo}):

$du_1/dt=u_2$

$du_2/dt=p u_2 -p u_1^2u_2-u_1$.\\
Consider now  the system $\Sigma'$ with uncertainty $w(\cdot)\in{\cal W}_0=[-0.5,0.5]$ and initial condition $x_0$:

$du_1/dt=u_2$

$du_2/dt=(p_0+w) u_2 -(p_0+w) u_1^2u_2-u_1$,\\
with $p_0=1.1$.
It is easy to see that
each solution of $\Sigma_p$ with $p\in[p_0-0.5,p_0+0.5]=[0.6,1.6]$
is a particular solution
of system $\Sigma'$.
\end{example}

\begin{proposition}\label{prop:noequil}
Suppose that, for some index $1\leq j\leq n$, we have $m_+^j<M_-^j$ where
$m_+^j$ (resp. $M_-^j$) denotes the minimum 
(resp. maximum) of $\tilde{x}^j(t)+\delta_{\varepsilon,{\cal W}}(t)$
(resp. $\tilde{x}^j(t)-\delta_{\varepsilon,{\cal W}}(t)$)
for $t\in[iT,(i+1)T]$.
Then $B[iT,(i+1)T]$ contains no fixed point
of $\Sigma'$.
\end{proposition}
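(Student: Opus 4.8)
The plan is to argue by contradiction using the coordinate-wise separation hypothesis. Suppose, for contradiction, that $x^\ast \in B[iT,(i+1)T]$ is a fixed point of $\Sigma'$, i.e. $f(x^\ast,w^\ast)=0$ for the relevant value(s) of the uncertainty. By definition of $B[iT,(i+1)T]=\bigcup_{t\in[iT,(i+1)T]}B(t)$, there is some $t^\ast\in[iT,(i+1)T]$ with $x^\ast\in B(t^\ast)=B(\tilde{x}(t^\ast),\delta_{\varepsilon,{\cal W}}(t^\ast))$. Since $B(t^\ast)$ is a Euclidean ball, this gives the coordinate-wise bound $\tilde{x}^j(t^\ast)-\delta_{\varepsilon,{\cal W}}(t^\ast)\le (x^\ast)^j \le \tilde{x}^j(t^\ast)+\delta_{\varepsilon,{\cal W}}(t^\ast)$ for every coordinate $j$, and in particular for the index $j$ from the hypothesis. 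Hence $(x^\ast)^j \le M_-^j$, using that $M_-^j$ is the maximum of $\tilde{x}^j(t)-\delta_{\varepsilon,{\cal W}}(t)$ over $[iT,(i+1)T]$ — wait, that inequality goes the wrong way, so the first key step is to extract the \emph{right} constraint.

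The correct chain of reasoning is: because $x^\ast$ is an equilibrium, the exact solution through $x^\ast$ is the constant trajectory $x(t;x^\ast)\equiv x^\ast$ for all $t\ge 0$. Now I invoke \cref{prop:1bis} (or rather its consequence at the level of the invariant tube): since $x^\ast\in B_0$ is \emph{not} assumed, I instead use that the whole tube $\bigcup_{t\ge 0}B(t)$ together with the stroboscopic inclusion $B((i+1)T)\subseteq B(iT)$ from \cref{prop:inclusion} makes $B[iT,(i+1)T]$ an invariant set. The cleaner route: the constant trajectory at $x^\ast$, viewed as a solution of $\Sigma'$ started inside the invariant $B[iT,(i+1)T]$, must remain in $B[iT,(i+1)T]$ — but more usefully, one shows the constant trajectory must be trapped in every $B(t)$ along a time-shifted argument, forcing $x^\ast \in \bigcap_{t} B(t)$ over a full period, and therefore $\tilde{x}^j(t)-\delta_{\varepsilon,{\cal W}}(t) \le (x^\ast)^j \le \tilde{x}^j(t)+\delta_{\varepsilon,{\cal W}}(t)$ for \emph{all} $t\in[iT,(i+1)T]$ simultaneously. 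Taking the sup of the lower envelope gives $M_-^j \le (x^\ast)^j$, and taking the inf of the upper envelope gives $(x^\ast)^j \le m_+^j$, whence $M_-^j \le m_+^j$, contradicting the hypothesis $m_+^j < M_-^j$.

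The main obstacle is justifying that the equilibrium point lies in the intersection $\bigcap_{t\in[iT,(i+1)T]}B(t)$ rather than merely in the union. This requires care about how \cref{prop:1bis} is applied when the reference Euler trajectory $\tilde{x}$ does not start at $x^\ast$. The argument I would spell out: suppose $x^\ast\in B(t^\ast)$. Consider restarting the construction — or use the periodicity $B((i+1)T)\subseteq B(iT)$ to conclude that the stroboscopic images $B(t^\ast + kT)$ are nested and all contain the relevant ball; combined with the fact that $x(t;x^\ast)$ is constant and, by \cref{prop:1bis} applied on each subinterval, stays within distance $\delta_{\varepsilon,{\cal W}}$ of $\tilde{x}$ shifted appropriately, one propagates membership $x^\ast\in B(t)$ backward and forward across the whole period. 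I would present this propagation step explicitly, then close with the two-inequality squeeze above; the remaining arithmetic ($M_-^j\le m_+^j$ contradicts $m_+^j<M_-^j$) is immediate.
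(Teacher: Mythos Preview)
Your final argument matches the paper's proof exactly: assume a fixed point $y_0\in B[iT,(i+1)T]$, obtain the two-sided bound $\tilde{x}^j(t)-\delta_{{\cal W}}(t)\le y_0^j\le \tilde{x}^j(t)+\delta_{{\cal W}}(t)$ \emph{for every} $t\in[iT,(i+1)T]$, then take the maximum of the lower envelope and the minimum of the upper envelope to get $M_-^j\le y_0^j\le m_+^j$, contradicting $m_+^j<M_-^j$.

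The only difference is emphasis. The paper asserts the ``for all $t$'' bound in a single sentence with no justification; you correctly single this out as the one nontrivial step (membership in the \emph{union} of balls only gives the bound at a single $t^\ast$) and supply the missing reasoning via the constant trajectory through the fixed point. Your mechanism---propagate $x^\ast\in B(t^\ast)$ forward along the tube using \cref{prop:1bis} step by step, reach $B((i+1)T)$, use the stroboscopic inclusion $B((i+1)T)\subseteq B(iT)$ to land back in $B(iT)$, then propagate again to cover the whole period---is the right one and can be made rigorous once you note that the iterative Euler-plus-$\delta$ construction composes (the tube started from $B(s)$ with radius $\delta_{{\cal W}}(s)$ is exactly $\{B(t)\}_{t\ge s}$). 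One small caveat: this argument uses the inclusion $B((i+1)T)\subseteq B(iT)$, which is not listed among the hypotheses of \cref{prop:noequil} as stated, though it is always in force in the surrounding context (\cref{prop:inclusion}, \cref{th:inclusion}). Your exploratory false start and the honest flag on the propagation step do not affect the correctness of the route you settle on.
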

\begin{proof}
By {\em reductio ad absurdum.}
Suppose that there is an uncertainty function $w(\cdot)\in{\cal W}$
such that there exists a fixed point $y_0$ of $\Sigma'$ in
$B[iT,(i+1)T]\equiv \bigcup_{t\in[iT,(i+1)T]}B(\tilde{x}(t),\delta_{{\cal W}}(t))$. Then the $j$-coordinate $y_0^j$ of $y_0$ satisfies
$\tilde{x}^j(t)-\delta_{{\cal W}}(t)\leq y^j\leq  \tilde{x}^j(t)+\delta_{{\cal W}}(t)$ for all $t\in [iT,(i+1)T]$. It follows that
$M_-^j\leq y_j\leq m_+^j$, which contradicts the hypothesis $m_+^j<M_-^j$.
\end{proof}

\begin{theorem}\label{th:inclusion}
Given a system $dx/dt=f(x,w)$ with uncertainty $w\in{\cal W}$ and
a set of initial conditions $B_0\equiv B(x_0,\varepsilon)$, suppose:
\begin{enumerate}
\item any solution $x(t)$ of the parametric system 
$\Sigma_p: dx/dt=f_p(x)$ with $p\in {\cal Q}\subseteq {\cal P}$ 
and initial condition $x_0$ is a solution of the  system 
$\Sigma': dx/dt=f(x,w)$ with initial set of conditions $B_0$
for some uncertainty function $w(\cdot)\in{\cal W}$.
\item $\Sigma'$ is such that there exist $T>0$ (with $T=k\tau$ for some $k$) 
and $i\in\mathbb{N}$: $B((i+1)T) \subseteq B(iT)$.
\item $m_+^j<M_-^j$ for some $1\leq j\leq n$ (where $m_+^j$ and $M_-^j$
are defined as in \cref{prop:noequil}).
%
\end{enumerate}
Then  no solution $x(t)$ of $\Sigma_p$ with $p\in {\cal Q}$, 
converges towards a point of $\mathbb{R}^n$ when $t\rightarrow\infty$.
In the case of dimension $n=2$, this implies that every solution
$x(t)$ of $\Sigma_p$ with $p\in{\cal Q}$ converges towards a limit cycle.
\end{theorem}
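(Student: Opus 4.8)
The plan is to combine the two geometric facts already in hand—the torus $B[iT,(i+1)T]$ is a compact invariant set of $\Sigma'$, and it contains no fixed point of $\Sigma'$—with two classical facts about autonomous flows: the limit of a convergent trajectory is an equilibrium, and (in the plane) the Poincar\'e--Bendixson dichotomy.

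I would first fix $p\in\mathcal{Q}$, let $x(\cdot)$ be the solution of $\Sigma_p$ with $x(0)=x_0$, and collect the consequences of the hypotheses. By hypothesis~1, $x(\cdot)$ is also a solution of $\Sigma'$ issued from a point of $B_0$. By hypothesis~2 and \cref{prop:inclusion}, writing $K$ for $B[iT,(i+1)T]$, the set $K$ is compact, invariant for $\Sigma'$, and contains every solution of $\Sigma'$ started in $B_0$ for all $t\ge iT$; hence $x(t)\in K$ for all $t\ge iT$, so in particular the forward orbit of $x$ is bounded and $x(\cdot)$ is defined on $[0,\infty)$. By hypothesis~3 and \cref{prop:noequil}, $K$ contains no fixed point of $\Sigma'$; and since an equilibrium of $\Sigma_p$ gives a constant solution of $\Sigma'$ for the constant uncertainty realising the embedding of hypothesis~1 (as in \cref{ex:vdp0}), $K$ contains no equilibrium of $\Sigma_p$ either.

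For the first conclusion I argue by contradiction: assume $x(t)\to x^*$ as $t\to\infty$. Since $x(t)\in K$ for $t\ge iT$ and $K$ is closed, $x^*\in K$. On the other hand, integrating $\dot x=f_p(x)$ over unit intervals gives $x(t+1)-x(t)=\int_t^{t+1}f_p(x(s))\,ds$; letting $t\to\infty$, the left-hand side tends to $0$, while the convergence $x(s)\to x^*$ (uniform in $s\in[t,t+1]$) together with continuity of $f_p$ makes the right-hand side tend to $f_p(x^*)$. Hence $f_p(x^*)=0$, so $x^*$ is an equilibrium of $\Sigma_p$ lying in $K$, contradicting the fact established above that $K$ contains no equilibrium of $\Sigma_p$. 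This proves that no solution of $\Sigma_p$, $p\in\mathcal{Q}$, can converge to a point of $\mathbb{R}^n$.

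For $n=2$, I pass to the $\omega$-limit set $\omega(x)$ of the trajectory. Since the forward orbit of $x$ is contained in the compact set $K$, $\omega(x)$ is nonempty, compact, connected and invariant under the flow of $\Sigma_p$, with $\omega(x)\subseteq K$, so $\omega(x)$ contains no equilibrium of $\Sigma_p$ by the previous step. The Poincar\'e--Bendixson theorem (applicable here since $f_p$ is smooth and solutions are unique) then forces $\omega(x)$ to be a periodic orbit $\Gamma$, and $\operatorname{dist}(x(t),\Gamma)\to 0$ as $t\to\infty$, which is exactly convergence towards a limit cycle. I expect the only point needing care to be the bridge ``equilibrium of $\Sigma_p$ $\Rightarrow$ fixed point of $\Sigma'$'' used to invoke \cref{prop:noequil}: one must check that the embedding of hypothesis~1 maps constant solutions of $\Sigma_p$ to constant solutions of $\Sigma'$, which holds whenever $\Sigma_p$ arises from $\Sigma'$ by freezing $w$ to a value in $\mathcal{W}$. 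The remaining ingredients—global existence of the trajectory, the ``limit is an equilibrium'' lemma, and the structure of planar $\omega$-limit sets—are standard.
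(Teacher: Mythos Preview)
Your argument is correct and follows the same contradiction strategy as the paper: trap the trajectory in the compact torus $K=B[iT,(i+1)T]$ via \cref{prop:inclusion}, observe that $K$ contains no fixed point via \cref{prop:noequil}, and conclude that the trajectory cannot converge. Your write-up is in fact more complete than the paper's: you explicitly supply the classical lemma that the limit of a convergent solution of an autonomous ODE is an equilibrium (the paper jumps directly from ``$y_0\in K$'' and ``$K$ has no fixed point'' to ``contradiction'' without ever saying why $y_0$ should be a fixed point), and you spell out the Poincar\'e--Bendixson step for $n=2$, which the paper only asserts in the theorem statement. You also correctly flag the one genuinely delicate bridge---that an equilibrium of $\Sigma_p$ yields a fixed point of $\Sigma'$ in the sense of \cref{prop:noequil}---and give the right justification (freezing $w$ at the constant value realising the embedding, as in \cref{ex:vdp0}); the paper's proof uses this implicitly without comment. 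One cosmetic point: the paper writes ``$x(t)\in B[iT,(i+1)T]$ for all $t>0$'', whereas your ``for all $t\ge iT$'' is the accurate version.
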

\begin{proof} By {\em reductio ad absurdum}.
Suppose that $y_0\in \mathbb{R}^n$ is the limit
of a solution $x(t)$ of $\Sigma_p$ with $p\in{\cal Q}$ when $t\rightarrow\infty$.
Since $x(t)$ is a solution of $\Sigma'$
and $B[(i+1)T]\subseteq B[iT]$ by hypotheses 1 and 2 of \cref{th:inclusion}, then $x(t)\in B[iT,(i+1)T]$ for all $t>0$ by \cref{prop:inclusion}.
So the limit $y_0$ belongs to the adherence of $B[iT,(i+1)T]$, which is equal
to $B[iT,(i+1)T]$ (since $B[iT,(i+1)T]$ is closed).
On the other hand, by hypothesis~3 of \cref{th:inclusion} and \cref{prop:noequil}, 
$B[iT,(i+1)T]$ contains no fixed point, whence a contradiction.
\end{proof}

The implementation has been done in Python
and corresponds to a program of
around 500 lines.
The source code is available at
	\href{https://lipn.univ-paris13.fr/~jerray/parameter/}{\nolinkurl{lipn.univ-paris13.fr/~jerray/parameter/}}.
In the experiments below, the program runs on a 2.80 GHz Intel Core i7-4810MQ CPU with 8\,GiB of memory.
Given $x_0$, one searches for values of $\tau,\varepsilon,p,|{\cal W}|,T$ at hand
(by trial and error)
in order to make the program verify $B(iT)\subset B((i+1)T)$ for some $i\in\mathbb{N}$, and $m_+^j<M_-^j$ for some $1\leq j\leq n$. This is illustrated
in the following examples.


\begin{example}\label{ex:vdp1}
Consider the system $\Sigma_p$ of \cref{ex:vdp0} 
and the system $\Sigma'$ with uncertainty $|{\cal W}_0|= 0.5$,
and let $p_0=1.1$.
%
%
%
%
%
%
%
Each solution of $\Sigma_p$ with $p\in[p_0-0.5,p_0+0.5]=[0.6,1.6]$
and initial condition $x_0$ is thus a particular solution of the system $\Sigma'$ with uncertainty
$|{\cal W}_0|= 0.5$ and set of initial conditions $B(x_0,\varepsilon_0)$ with $\varepsilon_0=0.2$.
For $x_0=(u_1(0),u_2(0))=(1.70177925, -0.12841500)$, 
$\tau=1/1000$,
$T_0=6.746=6746\tau$,
we find:

$\tilde{x}(0)   = (1.70177925, -0.12841500),  \delta_{{\cal W}}(0)  = 0.2$

$\tilde{x}(T_0)   = (1.86291588, -0.49830119)$, 

\hspace*{\fill} $\delta_{{\cal W}}(T_0)  = 1.80979852$

$\tilde{x}(2T_0) = (1.86246936, -0.49899153)$,  

\hspace*{\fill} $\delta_{{\cal W}}(2T_0) = 1.81704178$

$\tilde{x}(3T_0) = (1.86195743, -0.49951212)$,  

\hspace*{\fill} $\delta_{{\cal W}}(3T_0) = 1.83409614$

$\tilde{x}(4T_0) = (1.86144495, -0.50003175)$,  

\hspace*{\fill} $\delta_{{\cal W}}(4T_0) = 1.82669790$

$\tilde{x}(5T_0) = (1.86093193, -0.50055048)$,  

\hspace*{\fill} $\delta_{{\cal W}}(5T_0) = 1.81964175$

It follows: $B((i_0+1)T_0)\subset B(i_0T_0)$ for $i_0=3$.

On \cref{fig:illustration3}, the curve of Euler's approximation
$\tilde{x}(t)$ is depicted in red for initial condition $x_0$
and $t\in[0,5T_0]$. The borders of the tube 
$B(t)\equiv B(\tilde{x}(t),\delta_{\varepsilon,{\cal W}}(t))$
are depicted in green.
Various simulation curves are given in blue, and one can check
that they lie inside the green borders of the tube
(by \cref{prop:1bis}). The black vertical lines delimit the portion of the tube
between $t=i_0T_0$ and $t=(i_0+1)T_0$ (i.e., $B[i_0T_0,(i_0+1)T_0]$).
We also see on \cref{fig:illustration3} that
the minimum $m^1_+$ (represented by a small cyan ball)
of the upper green curve $\tilde{u}_1(t)+\delta_{{\cal W}}(t)$
 is less than the maximum $M^1_-$ (represented by a small gray ball)
of the lower green curve $\tilde{u}_1(t)-\delta_{{\cal W}}(t)$
(numerically, $m^1_+=-0.55695< M^1_-=-0.55411$).
It follows by \cref{th:inclusion} that no
solution $x(t)$ from $B(x_0,\varepsilon_0)$ converges to a point of $\mathbb{R}^n$.
As explained in \cref{ex:vdp0}, this shows that whatever the value of 
$p\in[p_0-|{\cal W}_0|,p_0+|{\cal W}_0|]=[0.6,1.6]$,
the solution of $\Sigma_p$ never converges to
a point of $\mathbb{R}^n$.
Since the size of the system is $n=2$, it follows by Poincar\'e-Bendixson's theorem that the solution of $\Sigma_p$ converges 
always towards a limit circle
for any $p\in[0.6,1.6]$ and initial condition in $B(x_0,\varepsilon_0)$.

The computation took $31s$.


\begin{figure}[h!]
\centering
\includegraphics[scale=0.4]{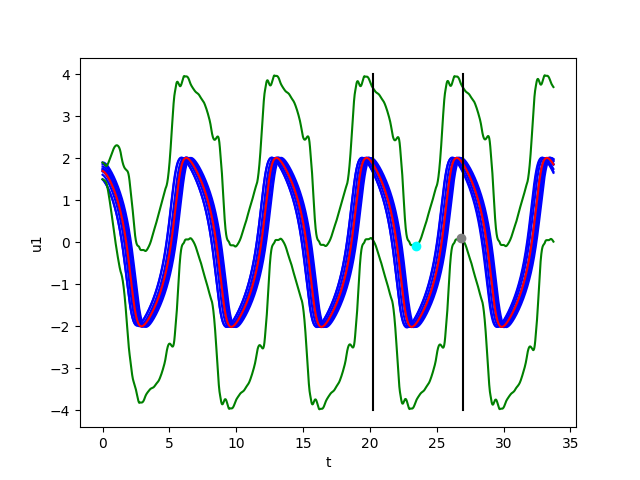}
\includegraphics[scale=0.4]{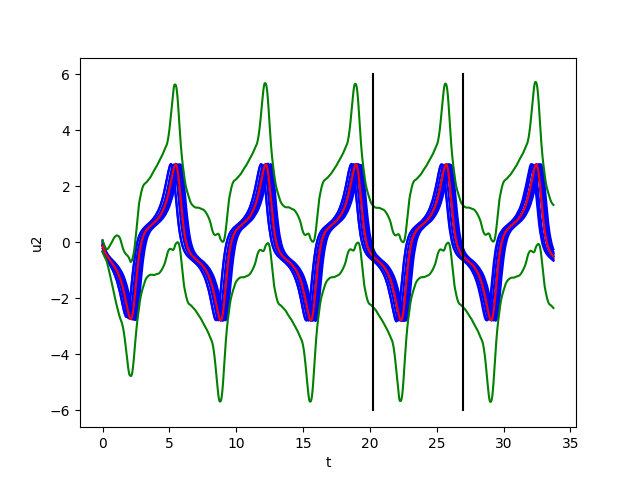}
\caption{VdP system
with parameter $p_0 = 1.1$, uncertainty $|{\cal W}_0|= 0.5$, initial radius $\varepsilon_0=0.2$, initial point $x_0=(1.7018, -0.1284)$, period $T_0=6.746$,
time-step $\tau=10^{-3}$.}
\label{fig:illustration3}
\end{figure}
\end{example}

\begin{example}\label{ex:vdp2}
We repeat the same experience as in \cref{ex:vdp1}, 
but for the value of the parameter $p_1 = 0.4$, the amplitude
of uncertainty $|{\cal W}_1|= 0.2$.
Each solution of $\Sigma_p$ with $p\in[p_1-0.2,p_1+0.2]=[0.2,0.6]$
and initial condition $x_0$ is 
now a particular solution of $\Sigma'$ with uncertainty
$|{\cal W}_1|= 0.2$ and set of initial conditions $B(x_0,\varepsilon_1)$ with $\varepsilon_1=0.2$.
For $T_1 = 6.347$, we find:

 $\tilde{x}(0)  = (1.70177925, -0.12841500),  \delta_{{\cal W}}(0)  = 0.2$

$\tilde{x}(T_1)  = (1.93787992, -0.35063219)$,  

\hspace*{\fill} $\delta_{{\cal W}}(T_1)  = 1.15449575$
    
$\tilde{x}(2T_1) = (1.96045193, -0.37539412)$,  

\hspace*{\fill} $\delta_{{\cal W}}(2T_1) = 1.40421766$
    
$\tilde{x}(3T_1) = (1.96234271, -0.37680622)$,  

\hspace*{\fill} $\delta_{{\cal W}}(3T_1) = 1.43088067$
    
$\tilde{x}(4T_1) = (1.96262148, -0.37636672)$,  

\hspace*{\fill} $\delta_{{\cal W}}(4T_1) = 1.42796887$
    
$\tilde{x}(5T_1) = (1.96277676, -0.37578520)$,  

\hspace*{\fill} $\delta_{{\cal W}}(5T_1) = 1.42671414$.\\
We have: $B((i_1+1)T_1)\subset B(i_1T_1)$ for $i_1=3$.
We now check on \cref{fig:illustration4}:  $m^1_+<M^1_-$,
It follows by \cref{th:inclusion}  that no
solution $x(t)$ from $B(x_0,\varepsilon_1)$ converges to a point of $\mathbb{R}^n$. This shows that whatever the value of 
$p\in[p_1-|{\cal W}_1|,p_1+|{\cal W}_1|]=[0.2,0.6]$,
the solution of $\Sigma_p$ never converges to
a point of $\mathbb{R}^n$.
It follows by Poincar\'e-Bendixson's theorem that the solution of $\Sigma_p$ converges 
always towards a limit circle
for any $p\in[0.2,0.6]$ and initial condition in $B(x_0,\varepsilon_1)$.

The computation time is $31s$.

\begin{figure}[h!]
\centering
\includegraphics[scale=0.4]{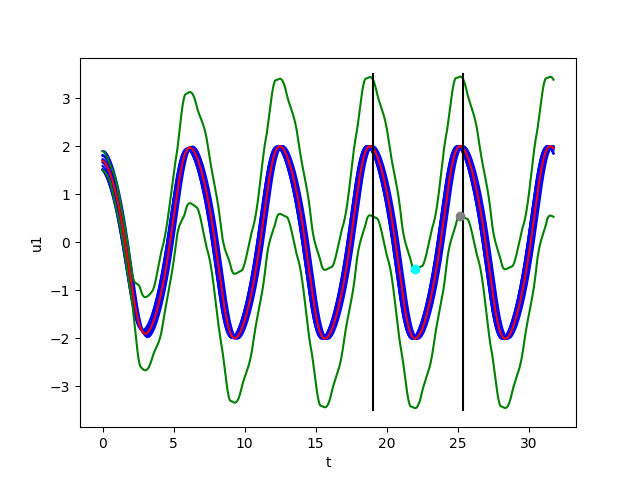}
\includegraphics[scale=0.4]{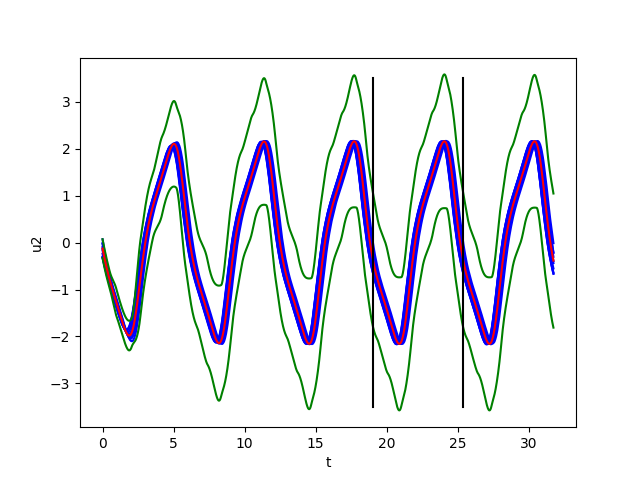}
\caption{VdP system
with parameter $p_1 = 0.4$, uncertainty $|{\cal W}_1|= 0.2$, initial radius $\varepsilon_1=0.2$, initial point $x_0=(1.7018, -0.1284)$, 
period $T_1=6.347$,
time-step $\tau=10^{-3}$.}
\label{fig:illustration4}
\end{figure}
\end{example}

\begin{example}\label{ex:vdp3}
We repeat again the same experience as in \cref{ex:vdp1}, 
but for the value of the parameter $p_2 = 1.9$, the amplitude
of uncertainty $|{\cal W}_2|= 0.3$, and initial radius $\varepsilon_2=0.1$.
Each solution of $\Sigma_p$ with $p\in[p_2-0.3,p_2+0.3]=[1.6,2.2]$
and initial condition $x_0$ is 
now a particular solution of $\Sigma'$ with uncertainty
$|{\cal W}_2|= 0.3$ and set of initial conditions $B(x_0,\varepsilon_2)$.
For $T_2 = 7.531$, we find:

    $\tilde{x}(0)  = (1.70177925, -0.12841500),  \delta_{{\cal W}}(0)  = 0.1$

    $\tilde{x}(T_2)  = (1.77514170, -0.39762673)$,  

\hspace*{\fill} $\delta_{{\cal W}}(T_2)  = 2.26229880$

    $\tilde{x}(2T_2) = (1.77450915, -0.39786534)$,  

\hspace*{\fill} $\delta_{{\cal W}}(2T_2) = 2.29620538$
    
$\tilde{x}(3T_2) = (1.7738762,  -0.39810409)$,  

\hspace*{\fill} $\delta_{{\cal W}}(3T_2) = 2.33196104$

$\tilde{x}(4T_2) = (1.77324288, -0.39834298)$,  

\hspace*{\fill} $\delta_{{\cal W}}(4T_2) = 2.25291437$

$\tilde{x}(5T_2) = (1.77260918, -0.39858202)$,  

\hspace*{\fill} $\delta_{{\cal W}}(5T_2) = 2.18787865$.

We have here: $B((i_2+1)T_2)\subset B(i_2T_2)$ for $i_2=3$.
We now check on \cref{fig:illustration5}:  $m^1_+<M^1_-$,
It follows by \cref{th:inclusion}  that no
solution $x(t)$ from $B(x_0,\varepsilon_2)$ converges to a point of $\mathbb{R}^n$. This shows that whatever the value of 
$p\in[p_2-|{\cal W}_2|,p_2+|{\cal W}_2|]=[1.6,2.2]$,
the solution of $\Sigma_p$ never converges to
a point of $\mathbb{R}^n$.
It follows by Poincar\'e-Bendixson's theorem that the solution of $\Sigma_p$ converges 
always towards a limit circle
for any $p\in[1.6,2.2]$ and initial condition in $B(x_0,\varepsilon_2)$.
The computation took $31s$.
Using the results of \cref{ex:vdp1} and \cref{ex:vdp2}, 
we know actually that, for any $p\in[0.2,2.2]$, the solution of $\Sigma_p$ always converges towards a limit cycle, for initial condition in $B(x_0,\varepsilon_2)$.

For each \cref{ex:vdp1,ex:vdp2,ex:vdp3}, once the inputs $\tau,\varepsilon,|{\cal W}|,T$ have 
been fixed, the program takes less than 100s to find $i$ such that
$B(T(i+1))\subset T(i)$ and $m_+< M_-$. The method is thus efficient
for treating such a range of parameters.
However, for $p>2.2$, it becomes difficult to find appropriate values 
of $T$ and $|{\cal W}|$ 
that make $B(T(i+1))\subset T(i)$ provable for some $i\in\mathbb{N}$, and
this points to a current limit of the method. 
As explained in 
\cite{vandenBerg-Queirolo}:
``The VdP system has a unique attracting limit cycle for
any $p>0$. In the limit $p\downarrow 0$ it tends to a circle in phase space and is essentially
described by a single Fourier mode. For large $p$ its dynamics fall in the fast-slow
paradigm and more and more Fourier modes are required to describe the orbit
accurately''. In \cite{vandenBerg-Queirolo},
they manage to prove the convergence towards a limit cycle for $p=9$, using 
more than 1000 Fourier modes (the computation time is cubic in the
number of modes). Note on the other hand, that their approach, unlike ours, is limited to systems with {\em polynomial} vector field~$f$.

\begin{figure}[h!]
\centering
\includegraphics[scale=0.4]{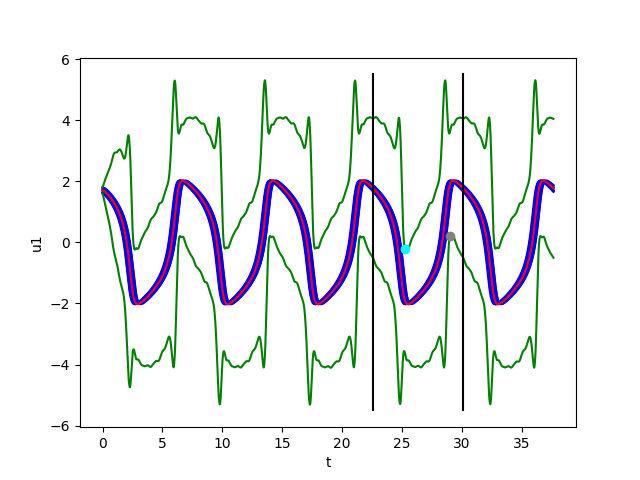}
\includegraphics[scale=0.4]{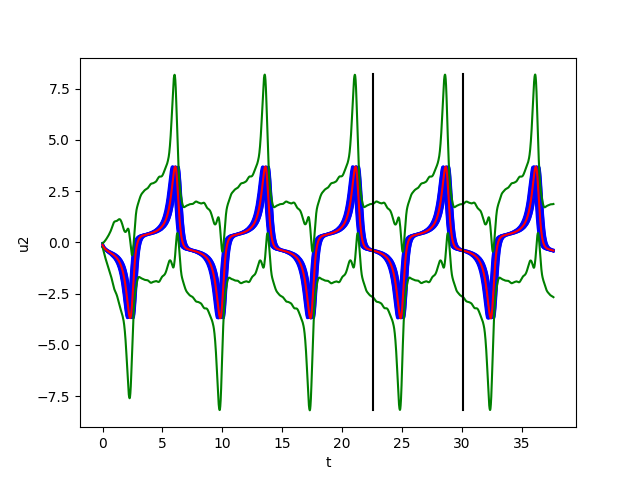}
\caption{VdP system
with parameter $p_2 = 1.9$, uncertainty $|{\cal W}_2|=0.3$, initial radius $\varepsilon_2=0.1$, initial point $x_0=(1.7018, -0.1284)$, 
period $T_2=7.531$,
time-step $\tau=10^{-3}$.}
\label{fig:illustration5}
\end{figure}
\end{example}

\section{Conclusion}\label{sec:final}

We have given a simple method to generate a bounded invariant for a differential system.
We have seen that the method can be used to show that, for a parameterized 
differential system, the solutions never converge to an equilibrium point, for any value of the parameter varying over a certain domain. The method uses a very general criterion of inclusion of one set in another. This is more general and a priori simpler to show than the classical contraction property which states that any two trajectories converge exponentially towards each other,
and often requires the finding of a Lyapunov function.
We have illustrated the interest of the method on the parameterized
example of a VdP system.
We have also pointed out some limits of the method which cannot account for convergence to a limit cycle for complex systems whereas it is feasible with specialized methods using Fourier analysis. In future work, we plan to to extend our method 
in order to account for such an analysis.


\newpage
\bibliographystyle{plain}
\bibliography{euler}

\begin{thebibliography}{10}

\bibitem{Aminzare-Sontag}
Zahra Aminzare and Eduardo~D. Sontag.
\newblock Contraction methods for nonlinear systems: {A} brief introduction and
  some open problems.
\newblock In {\em 53rd {IEEE} Conference on Decision and Control, {CDC} 2014,
  Los Angeles, CA, USA, December 15-17, 2014}, pages 3835--3847, 2014.

\bibitem{Parrilo-Slotine}
Erin~M. Aylward, Pablo~A. Parrilo, and Jean{-}Jacques~E. Slotine.
\newblock Stability and robustness analysis of nonlinear systems via
  contraction metrics and {SOS} programming.
\newblock {\em Automatica}, 44(8):2163--2170, 2008.

\bibitem{Baresi-Scheeres}
Baresi and Scheeres.
\newblock Quasi-periodic invariant tori of time-periodic dynamical systems:
  Applications to small body exploration.
\newblock In {\em 67th International Conference on Astronautical Congress,
  Guadalajara, Mexico}, 2016.

\bibitem{Castelli-Lessard}
Roberto Castelli and Jean-Philippe Lessard.
\newblock Rigorous numerics in {F}loquet theory: computing stable and unstable
  bundles of periodic orbits.
\newblock {\em SIAM Journal on Applied Dynamical Systems}, 12(1):204--245,
  2013.

\bibitem{CoentF19}
Adrien~Le Co{\"{e}}nt and Laurent Fribourg.
\newblock Guaranteed optimal reachability control of reaction-diffusion
  equations using one-sided {L}ipschitz constants and model reduction.
\newblock In {\em Model-Based Design of Cyber Physical Systems (CyPhy'19),
  N.-Y., USA}, 2019.

\bibitem{cplex2009v12}
IBM~ILOG Cplex.
\newblock V12. 1: User’s manual for cplex.
\newblock {\em International Business Machines Corporation}, 46(53):157, 2009.

\bibitem{Gomez-Mondelo}
G.~G\'omez and J.~M. Mondelo.
\newblock The dynamics around the collinear equilibrium points of the {RTBP}.
\newblock {\em Physica D: Nonlinear Phenomena}, 157(4):283--321, 2001.

\bibitem{Kapela-Simo}
Tomasz Kapela and Carles Simo.
\newblock Computer assisted proofs for nonsymmetric planar choreographies and
  for stability of the eight.
\newblock {\em Nonlinearity}, 20(5), 2007.

\bibitem{AdrienRP17}
Adrien Le~Co\"ent, Julien Alexandre Dit~Sandretto, Alexandre Chapoutot, Laurent
  Fribourg, Florian De~Vuyst, and Ludovic Chamoin.
\newblock Distributed control synthesis using {E}uler's method.
\newblock In {\em Proc. of International Workshop on Reachability Problems
  (RP'17)}, volume 247 of {\em Lecture Notes in Computer Science}, pages
  118--131. Springer, 2017.

\bibitem{SNR17}
Adrien Le~Co\"ent, Florian De~Vuyst, Ludovic Chamoin, and Laurent Fribourg.
\newblock Control synthesis of nonlinear sampled switched systems using
  {E}uler's method.
\newblock In {\em Proc. of International Workshop on Symbolic and Numerical
  Methods for Reachability Analysis (SNR'17)}, volume 247 of {\em EPTCS}, pages
  18--33. Open Publishing Association, 2017.

\bibitem{Lohner87}
Rudolf~J. Lohner.
\newblock Enclosing the solutions of ordinary initial and boundary value
  problems.
\newblock {\em Computer Arithmetic}, pages 255--286, 1987.

\bibitem{ManchesterCDC13}
Ian~R. Manchester and Jean{-}Jacques~E. Slotine.
\newblock Transverse contraction criteria for existence, stability, and
  robustness of a limit cycle.
\newblock In {\em Proceedings of the 52nd {IEEE} Conference on Decision and
  Control, {CDC} 2013, December 10-13, 2013, Firenze, Italy}, pages 5909--5914.
  {IEEE}, 2013.

\bibitem{Olikara-Scheeres}
Zubin~P. Olikara and Daniel~J. Scheeres.
\newblock Numerical method for computing quasi-periodic orbits and their
  stability in the restricted three-body problem.
\newblock {\em Advances in the Astronautical Sciences}, 145, 2012.

\bibitem{SchurmannA17b}
Bastian Sch{\"{u}}rmann and Matthias Althoff.
\newblock Guaranteeing constraints of disturbed nonlinear systems using
  set-based optimal control in generator space.
\newblock {\em IFAC-PapersOnLine}, 50(1):11515 -- 11522, 2017.
\newblock 20th IFAC World Congress.

\bibitem{SchurmannA17}
Bastian Sch{\"{u}}rmann and Matthias Althoff.
\newblock Optimal control of sets of solutions to formally guarantee
  constraints of disturbed linear systems.
\newblock In {\em 2017 American Control Conference, {ACC} 2017, Seattle, WA,
  USA, May 24-26, 2017}, pages 2522--2529, 2017.

\bibitem{vandenBerg-Queirolo}
Jan~Bouwe van~den Berg and Elena Queirolo.
\newblock A general framework for validated continuation of periodic orbits in
  systems of polynomial {ODE}s.
\newblock {\em Journal of Computational Dynamics}, 0(2158-2491-2019-0-10),
  2020.

\bibitem{Zgliczynski}
Piotr Zgliczynski.
\newblock C1 {L}ohner algorithm.
\newblock {\em Found. Comput. Math.}, 2(4), 2002.

\end{thebibliography}

\end{document}